\journalname{Algorithmica}
\newcommand{\Oh}[1]
    {\ensuremath{\mathcal{O} \hspace{-.5ex} \left( {#1} \right)}}
\newcommand{\occ}
    {\ensuremath{\mathrm{occ}}}
\newcommand{\extract}
    {\ensuremath{\mathsf{extract}}}
\begin{document}

\title{Faster Approximate Pattern Matching in Compressed Repetitive Texts
    \thanks{A preliminary version of this work appeared in the proceedings of ISAAC 2011~\cite{GGP11}.}}

\author
{Travis Gagie \and
Pawe{\l} Gawrychowski \and\\
Christopher Hoobin \and
Simon J. Puglisi}

\institute
{T. Gagie \at
Department of Computer Science\\
Aalto University, Finland.\\
\email{travis.gagie@aalto.fi} \\[1ex]
\and
P. Gawrychowski \at
Max Planck Institute\\
Saarbr\"ucken, Germany\\
\email{gawry@cs.uni.wroc.pl} \\[1ex]
\and
C. Hoobin \at
School of Computer Science and Information Technology\\
Royal Melbourne Institute of Technology, Australia\\
\email{christopher.hoobin@rmit.edu.au} \\[1ex]
\and
S. J. Puglisi \at
Department of Informatics\\
King's College London, United Kingdom\\
\email{simon.puglisi@kcl.ac.uk}}

\maketitle \thispagestyle{empty}

\begin{abstract}
Motivated by the imminent growth of massive, highly redundant genomic databases,
we study the problem of compressing a string database while simultaneously
supporting fast random access, substring extraction and pattern matching to
the underlying string(s).  Bille et al. (2011) recently showed how, given a straight-line program with
$r$ rules for a string $s$ of length $n$, we can build an $\Oh{r}$-word data
structure that allows us to extract any substring of length $m$ in $\Oh{\log
n + m}$ time.  They also showed how, given a pattern $p$ of length $m$
and an edit distance \(k \leq m\), their data structure supports finding
all \occ\ approximate matches to $p$ in $s$ in $\Oh{r (\min (m k, k^4 + m) +
\log n) + \occ}$ time.  Rytter (2003) and Charikar et al. (2005) showed that
$r$ is always at least the number $z$ of phrases in the LZ77 parse of $s$,
and gave algorithms for building straight-line programs with $\Oh{z \log n}$
rules. In this paper we give a simple $\Oh{z \log n}$-word data structure
that takes the same time for substring extraction but only $\Oh{z \min (m
k, k^4 + m) + \occ}$ time for approximate pattern matching.
\keywords{Compressed pattern matching \and Approximate pattern matching \and LZ77}
\end{abstract}

\section{Introduction} \label{sec:intro}

The recent revolution in high-throughput sequencing technology has made the
acquisition of large genomic sequences drastically cheaper and faster. As the
new technology takes hold, ambitious sequencing projects such as the 1,000
Human Genomes~\cite{Dur10} and the 10,000 Vertebrate Genomes~\cite{G10K}
projects are set to create large databases of strings (genomes) that vary only
slightly from each other, and so will contain large numbers of long repetitions.
Efficient storage of these collections is not enough: fast access to enable
search and sequence alignment is paramount. The utility of such a data structure
is not limited to the treatment of DNA collections. Ferragina and
Manzini's recent study of the compressibility of web pages reveals enormous
redundancy in web crawls~\cite{FM10}. Exploiting this redundancy to reduce
space while simultaneously enabling fast access and search over crawled pages
(for snippet generation or cached page retrieval) is a significant challenge.
The problem of compressing and indexing such highly repetitive strings (or string
collections) was introduced in~\cite{SVMN08} (see also~\cite{MNSV10}).
With an LZ78- or BWT-based data
structure~\cite{ANS12,FV07} we can store a string $s$ of length $n$ in space bounded in terms of the
$t$th-order empirical entropy~\cite{Man01}, for any \(t = o (\log_\sigma n)\), and later
extract any substring of length $m$ in $\Oh{m / \log_\sigma n}$ time.  For
very repetitive texts, however, compression based on the LZ77~\cite{ZL77} can
use significantly fewer than \(n H_t (s)\) bits~\cite{SVMN08}.

Rytter~\cite{Ryt03} showed that the number $z$ of phrases in the LZ77 parse of $s$ is at most the number of rules in the smallest straight-line program (SLP) for $s$\footnote{In this paper we consider only the version of LZ77 without self-referencing, sometimes called LZSS~\cite{SS82}.}.
He then showed how the LZ77 parse can be turned into an SLP for $s$ with $\Oh{z \log n}$ rules whose parse-tree has height $\Oh{\log n}$.  This SLP can be viewed as a data structure that stores $s$ in $\Oh{z \log n}$ words and supports substring extraction in $\Oh{\log n + m}$ time.  Bille, Landau, Raman, Rao, Sadakane and Weimann~\cite{BLRSSW11} showed how, given an SLP for $s$ with $r$ rules, we can build a data structure that takes $\Oh{r}$ words and supports substring extraction in $\Oh{\log n + m}$ time regardless of the height of the parse tree.  Unfortunately, since no polynomial-time algorithm is known to produce an SLP for $s$ with \(o (z \log n)\) rules, even with no bound on the height, we still do not know how, efficiently, to build a data structure that has better bounds than Rytter's.

Bille et al.~\cite{BLRSSW11} also show how, given a pattern $p$ of length $m$ and an edit distance \(k \leq m\), their data structure supports finding all \occ\ approximate matches to $p$ in $s$ in $\Oh{r (\min (m k, k^4 + m) + \log n) + \occ}$ time.  Their main idea is that, if there is a rule \(X \rightarrow Y Z\) in the SLP and we have already found all the approximate matches in expansions of $Y$ and $Z$ then, to find all the approximate matches in the expansion of $X$, we need only search the substring consisting of the \(m + k\) last characters of $Y$'s expansion concatenated with the first \(m + k\) characters of $Z$'s expansion.  Extracting these characters with their data structure takes $\Oh{\log n + m}$ time per rule, or $\Oh{r (\log n + m)}$ time in total.  In this paper we discuss two improvements to this idea: first, by the same argument, we need only search the \(m + k\) characters to either side of the phrase boundaries in the LZ77 parse; second, since we know in advance where those phrase boundaries are, we do not need the full power of random access.  Our first observation immediately improves Bille et al.'s time bound for approximate matching to $\Oh{z (\min (m k, k^4 + m) + \log n) + \occ}$, while our second has led us to develop a data structure whose time bound is $\Oh{z \min (m k, k^4 + m) + \occ}$.

Neither Rytter's nor Bille et al.'s data structures are practical.  However, in another strand of recent work, Kreft and Navarro~\cite{KN10,KN11} introduced a variant of LZ77 called LZ-End and gave a data structure based on it with which we can store $s$ in \(\Oh{z' \log n} + o (n)\) bits, where $z'$ is the number of phrases in the LZ-End parse of $s$, and later extract any {\em phrase} (not arbitrary substring) in time proportional to its length.  The \(o (n)\) term can be removed at the cost of slowing extraction down by an $\Oh{\log n}$ factor.  Extracting arbitrary substrings is fast in practice but could be slow in the worst case.  Also, although the LZ-End encoding is small in practice for very repetitive strings, it is not clear whether $z'$ can be bounded in terms of $z$.

\paragraph{Our Contribution.} In this paper we describe a simple $\Oh{z \log n}$-word data structure,
which we call the {\em block graph} for $s$, that takes $\Oh{\log n + \ell - f}$ time to extract any
substring \(s [f..\ell]\) but lets us add bookmarks to speed up extraction from pre-specified points.  This allows us to find all \occ\ approximate matches of a pattern of length $m$ in $\Oh{z \min (m k + m, k^4 + m) + \occ}$ time. Our space bound (in terms of $z$) and substring extraction time are the same as Bille et al.'s~\cite{BLRSSW11}; our approximate pattern matching time is faster both because we replace $r$ by $z$ (which, as noted above, they can too) and because we remove the \(\log n\) term, which is due to the overhead for random access.  More importantly, however, our results require much simpler machinery.  We believe the block graph is the first practical data structure with solid theoretical guarantees for compression and retrieval of highly repetitive collections.

In the next section we describe the block graph. Then, in Section~\ref{sec:theoretical}, we
relate the size of the block graph to the size of the LZ77 parsing of its underlying string.
We show that a block graph naturally compresses the string while allowing efficient random
access and extraction of substrings. In Section~\ref{sec:accelerated} we show how to augment
the block graph to support fast approximate pattern matching.  In Section~\ref{sec:experiments} we describe a practical implementation of the block graph and compare its performance to that of Kreft and Navarro's data structure.

We note that the idea of searching only around phrase boundaries in the LZ77 parse could be useful in other contexts.  For example, suppose we want to build an index for approximate pattern matching in a text and we know in advance reasonable upper bounds $M$ and $K$ on the lengths of the patterns and the edit distances in which we will be interested.  We can extract the \(M + K\) characters to either side of each boundary, obtaining substrings of length \(2 (M + K)\); separate each pair of consecutive substrings by \(K + 1\) copies of a character not in the alphabet; and build an index for the resulting modified string, which could be much smaller.  For any pattern of length at most $M$ and any edit distance at most $K$, the original string contains an approximate match if and only if the modified string does; moreover, from the positions of the approximate matches in the modified string and the structure of the LZ77 parse, we can use two-sided range reporting to deduce the positions of the approximate matches in the original string~\cite{GGKNP12}.  We hope to use similar ideas to reduce the space usage of hash-based indexes~\cite{VDTP12}.

\section{Block graphs} \label{sec:blocks}

For the moment, assume \(n = 2^h\) for some integer $h$.  We start building the block graph of $s$ with node \(\langle 1..n \rangle\), which we call the root and consider to be at depth 0.  For \(0 \leq d < t\), for each node \(v = \langle i..i + b - 1 \rangle\) at depth $d$, where \(b = 2^{t - d}\) is the block size at depth $d$, we add pointers from $v$ to nodes \(\langle i..i + b / 2 - 1 \rangle\), \(\langle i + b / 4..i + 3 b / 4 - 1 \rangle\) and \(\langle i + b / 2..i + b - 1 \rangle\), creating those nodes if necessary.  We call these three nodes the children of $v$ and each other's siblings, and we call $v$ their parent.  Notice that a node can have two parents.  We associate with each node \(\langle i..j \rangle\) the block \(s [i..j]\) of characters in $s$.  If $n$ is not a power of 2, then we append blanks to $s$ until it is.  After building the block graph, we remove any nodes whose blocks contain only blanks or blanks and characters in another block at the same depth, and replace any node \(\langle i..j \rangle\) with \(j > n\) by \(\langle i..n \rangle\).  We delete all pointers to any such nodes.

We can reduce the size of the block graph by truncating it such that we keep only the nodes at depths where storing three pointers takes less space than storing a block of characters explicitly.  We mark as an internal node each node whose block is the first occurrence of that substring in $s$.  At the deepest internal nodes, instead of storing a pointer, we store the nodes' blocks explicitly.
We mark as a leaf all nodes whose block is not unique and whose parents are internal nodes.
We then remove any node that is not marked as an internal node or a leaf.  Figure~\ref{fig:fibonacci} shows the block graph for the eighth Fibonacci string, {\sf abaababaabaababaababa}, truncated at depth 3.  Oval nodes are internal nodes and rectangular nodes are leaves.  Notice that the root has only two children, because the block for node \(\langle 17..32 \rangle\) would contain only blanks and characters in \(s [9..21]\), so \(\langle 17..32 \rangle\) is removed; similarly, \(\langle 21..24 \rangle\) is removed.

\begin{figure*}
\begin{center}
\includegraphics[width=.9\textwidth]{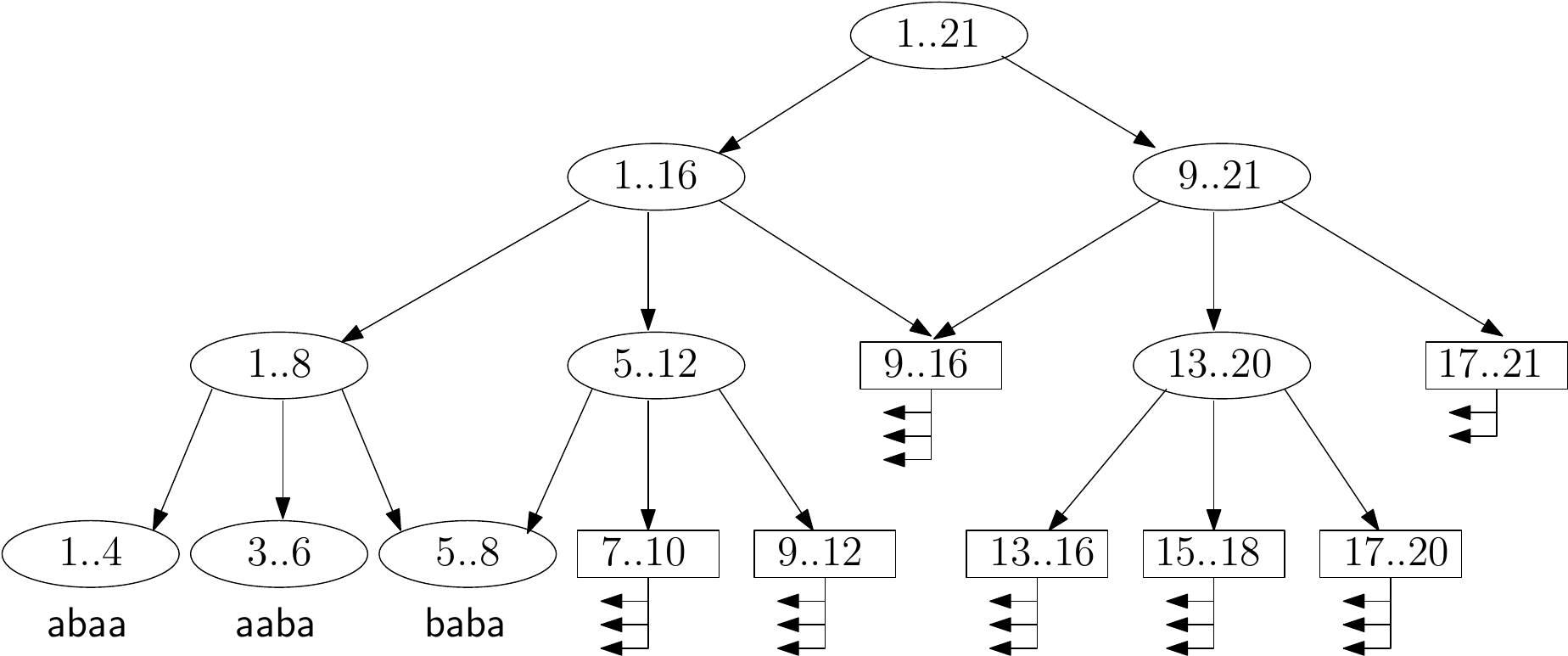}
\caption{The block graph for the eighth Fibonacci string, {\sf abaababaabaababaababa}, truncated at depth 3.}
\label{fig:fibonacci}
\end{center}
\end{figure*}

The key phase in building the block graph is updating the leaves' pointers, shown in Figure~\ref{fig:fibonacci} as the arrows below rectangular nodes.  Suppose a leaf $u$ at depth $d$ had a child \(\langle i..j \rangle\), which was been removed because it was neither an internal node nor a leaf.  Consider the first occurrence \(s [i'..j']\) in $s$ of the substring \(s [i..j]\).  Notice that \(s [i'..j']\) is completely contained within some block at depth $d$ --- this is one reason why we use overlapping blocks --- and, since \(s [i'..j']\) is the first occurrence of that substring in $s$, that block is associated with an internal node $v$.  We replace the pointer from $u$ to \(\langle i..j \rangle\) by a pointer to $v$ and the offset of $i'$ in $v$'s block.  For the example shown in Figure~\ref{fig:fibonacci}, \(\langle 17..21 \rangle\) previously had children \(\langle 17..20 \rangle\) and \(\langle 19..21 \rangle\).  The blocks \(s [17..20] = abab\) and \(s [19..21] = aba\), which first occur in positions 4 and 1, respectively.  Therefore, we replace \(\langle 17..21 \rangle\)'s pointer to \(\langle 17..20 \rangle\) by a pointer to \(\langle 1..8 \rangle\) and the offset 3; we replace its pointer to \(\langle 19..21 \rangle\) by another pointer to \(\langle 1..8 \rangle\) and the offset 0.

Extracting a single character \(s [i]\) in $\Oh{\log n}$ time is fairly straightforward: we start at the root and repeatedly descend to any child whose block contains \(s [i]\); if we come to a leaf $u$ such that \(s [i]\) is the $j$th character in $u$'s block but, instead of pointing to a child whose block contains \(s [i]\), $u$ stores a pointer to internal node $v$ and offset $c$, then we follow $u$'s pointer to $v$ and extract the \((j + c)\)th character in $v$'s block; finally, when we arrive at an internal node with maximum depth, we report the appropriate character of its block, which is stored there explicitly. By definition the maximum depth of the block graph is $\log n$ and at each depth, we either descend immediately in $\Oh{1}$ time, or follow a pointer from a leaf to an internal node in $\Oh{1}$ time and then descend.  Therefore, we use a total of $\Oh{\log n}$ time.

For example, suppose we want to extract the 11th character from \(s = \mathsf{abaababaabaababaababa}\) using the block graph shown in Figure~\ref{fig:fibonacci}.  Starting at the root, we can descend to either child, since both their blocks contain \(s [11]\); suppose we descend to the left child, \(\langle 1..16 \rangle\).  From \(\langle 1..16 \rangle\) we can descend to either the middle or right children; suppose we descend to the right child, \(\langle 9..16 \rangle\).  Since \(\langle 9..16 \rangle\) is a leaf, the pointer to child \(\langle 9..12 \rangle\) has been replaced by a pointer to \(\langle 1..8 \rangle\) and offset 0, while the pointer to child \(\langle 11..14 \rangle\) has been replaced by another pointer to \(\langle 1..8 \rangle\) and offset 2.  This is because the first occurrence of \(s [9..12] = \mathsf{abaa}\) is \(s [1..4]\) and the first occurrence of \(s [11..14] = \mathsf{aaba}\) is \(s [3..6]\). Suppose we follow the second pointer.  Since we would have extracted the first character from \(\langle 11..14 \rangle\)'s block, we are now to extract the third character from \(\langle 1..8 \rangle\)'s block.  We can descend to either \(\langle 1..4 \rangle\) and extract the third character of its block, or descend to \(\langle 3..6 \rangle\) and extract the first character of its block.

Extracting longer substrings is similar, but complicated by the fact that we want to avoid breaking the substring into too many pieces as we descend.  In the next section we will show how to extract any substring of length $m$ in $\Oh{\log n + m}$ time; however, we first prove an upper bound on the block graph's size.

\section{Fast random access in compressed space} \label{sec:theoretical}

In this section we show that block graphs achieve compression while simultaneously allowing
easy access to the underlying string. Our space result relies on the following easily proved
lemma.

\begin{lemma}[\cite{GG10}] \label{prop:breaks}
The first occurrence of any substring in $s$ must touch at least one boundary between phrases in the LZ77 parse.
\end{lemma}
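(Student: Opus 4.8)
The plan is to argue by contradiction. Suppose some substring $w = s[i..j]$ is the first occurrence of $w$ in $s$, yet $w$ lies strictly inside a single phrase of the LZ77 parse, touching no phrase boundary. Let that phrase be $s[a..b]$, so that $a < i$ and $j < b$ (or at worst $a \le i-1$ and $j \le b-1$; the point is that $w$ does not begin at $a$ nor extend to $b$, hence is a proper factor of the phrase that does not include the phrase's leftmost character — actually all we need is that $w$ is contained in $s[a..b]$ with $i > a$).

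The key step is to invoke the defining property of LZ77 (in the non-self-referencing LZSS variant the paper fixes): every phrase $s[a..b]$ with $a$ not the very first position is, by construction, a substring that has an earlier occurrence $s[a'..b']$ with $b' < a$, i.e. the source of the phrase lies entirely before the phrase. Since $w = s[i..j]$ sits inside $s[a..b]$ at offset $i-a$, the same relative window of the source copy, namely $s[a' + (i-a) .. a' + (j-a)]$, is an occurrence of $w$; and this occurrence ends at position $a' + (j - a) \le b' + (j - b) < b' + 0 < a \le i$, so it ends strictly before $i$. That contradicts the assumption that $s[i..j]$ was the \emph{first} occurrence of $w$. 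Hence every first occurrence must touch at least one phrase boundary.

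I would phrase this carefully for two edge cases: a phrase that is a single fresh character introduced because it had no prior occurrence (then any substring sitting in it has length $1$ and equals that boundary character, so it touches the boundary trivially), and the situation where $w$ begins exactly at a phrase start $a$ (then it already touches a boundary and there is nothing to prove). So the only nontrivial case is $w$ strictly interior to a copied phrase, which is exactly the case the source-copying argument kills.

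The main obstacle — really the only subtlety — is bookkeeping the offsets so that the inequality $a' + (j-a) < i$ comes out right, using $b' < a$ together with $j \le b$ and $b' - a' = b - a$. It is purely arithmetic, but one must be consistent about whether the phrase boundary is counted as the last character of one phrase or the first of the next; I would fix the convention that a substring "touches" a boundary if it contains the last character of some phrase, state it once, and then the chain of inequalities closes the argument.
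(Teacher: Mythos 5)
Your proof is correct and is exactly the standard argument: a substring contained in a copied phrase also occurs at the corresponding offset inside that phrase's (strictly earlier, since the parse is non-self-referencing) source, so a first occurrence must either cross into two phrases or coincide with a single-literal phrase. The paper itself omits the proof and defers to the cited reference, but this is the intended reasoning, offset bookkeeping and edge cases included.
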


Lemma~\ref{prop:breaks} allows us to relate the size of the block graph to the LZ77
parsing of the underlying string, as summarized below.

\begin{theorem}
The block graph for $s$ takes $\Oh{z \log^2 n}$ bits.
\end{theorem}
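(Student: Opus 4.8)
The plan is to bound the number of nodes in the block graph at each depth, then sum over all $\Oh{\log n}$ depths, multiplying by the $\Oh{\log n}$ bits needed to store the three pointers (or a constant-size block) at each node. Fix a depth $d$ with block size $b = 2^{t-d}$, and consider the nodes that survive the truncation and pruning, i.e.\ the internal nodes at depth $d$ together with the leaves at depth $d$. By definition, an internal node at depth $d$ is one whose block $s[i..i+b-1]$ is the \emph{first} occurrence of that substring in $s$. By Lemma~\ref{prop:breaks}, each such first occurrence must touch at least one LZ77 phrase boundary. There are $z$ phrase boundaries, and since the blocks at a fixed depth come in only three ``phases'' (those starting at positions $\equiv 0$, $\equiv b/4$, $\equiv b/2 \pmod{b}$, roughly speaking — the left, middle and right children from the level above), a single phrase boundary can be touched by only $\Oh{1}$ blocks at depth $d$: at most a constant number of the length-$b$ windows at this depth contain any given position. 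Hence there are $\Oh{z}$ internal nodes at depth $d$.

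Next I would bound the leaves at depth $d$. A leaf is a non-unique block whose parent is an internal node. Each internal node at depth $d-1$ has exactly three children, so the number of leaves at depth $d$ is at most three times the number of internal nodes at depth $d-1$, which is $\Oh{z}$ by the previous paragraph. Therefore the total number of surviving nodes at depth $d$ is $\Oh{z}$.

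Summing over the $\Oh{\log n}$ depths gives $\Oh{z \log n}$ nodes in the block graph. Each node stores either three child pointers, or (at a deepest internal node) an explicit block; in the truncated construction the blocks stored explicitly have size $\Oh{\log n / \log \sigma} = \Oh{\log n}$ bits, and each pointer, together with an offset, is an index into a string of length $n$ and hence takes $\Oh{\log n}$ bits. So every node costs $\Oh{\log n}$ bits, and the block graph occupies $\Oh{z \log n} \cdot \Oh{\log n} = \Oh{z \log^2 n}$ bits in total.

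The main obstacle is the counting argument for internal nodes: one must argue carefully that a fixed LZ77 phrase boundary is touched by only $\Oh{1}$ blocks at each depth. This is where the overlapping-thirds structure of the block graph is essential — the blocks at depth $d$ are translates of a length-$b$ window by multiples of $b/4$, so only four of them contain any given position — and it is what lets us charge the $\Oh{z \log n}$ node bound to the $z$ phrase boundaries rather than to the $\Theta(n/b)$ blocks that would exist without truncation. Everything else (the factor-of-three blow-up for leaves, the $\Oh{\log n}$-bit accounting per node, and the $\Oh{\log n}$ depth bound) is routine.
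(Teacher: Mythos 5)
Your proposal is correct and follows essentially the same route as the paper: Lemma~\ref{prop:breaks} charges each internal node's block to an LZ77 phrase boundary, a constant number of blocks per level touch any given boundary (the paper says three, you say four, but any constant suffices), leaves are absorbed as children of internal nodes, and the $\Oh{z}$-per-level count is multiplied by $\Oh{\log n}$ levels and $\Oh{\log n}$ bits per node. Your explicit accounting of the leaves and of the per-node storage is just a more detailed version of what the paper's proof leaves implicit.
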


\begin{proof}
Each internal node's block is the first occurrence of that substring in $s$ so, by Proposition~\ref{prop:breaks}, it must touch at least one boundary between phrases in the LZ77 parse.  Since each such boundary can touch at most three blocks in the same level, there are at most \(3 z\) internal nodes in each level.  It follows that there are $\Oh{z \log n}$ nodes in all.  Since each node stores $\Oh{\log n}$ bits, the whole block graph takes $\Oh{z \log^2 n}$ bits.
\end{proof}

We define the query \(\extract (u, i, j)\) to return the $i$th through $j$th characters in $u$'s block.  Notice that, if $u$ is the root, then these characters are \(s [i..j]\).  We now show how to implement \extract\ queries in such a way that extracting a substring of $s$ with length $m$ takes $\Oh{\log n + m}$ time.

There are three cases to consider when performing \(\extract (u, i, j)\): $u$ could be an internal node at maximum depth, in which case we simply return the $i$th through $j$th characters of its block, which are stored explicitly; $u$ could be an internal node with children; or $u$ could be a leaf.  First suppose that $u$ is an internal node with children.  Let $d$ be $u$'s depth and \(b = 2^{\lceil \log_2 n \rceil - d}\); notice $b$ is the length of $u$'s block unless the block is a suffix of $s$, in which case the block might be shorter.  If the interval \([i..j]\) is completely contained in one of the intervals \([1..b / 2]\), \([b / 4 + 1..3 b / 4]\) or \([b / 2 + 1..b]\), then we set $v$ to be the left, middle or right child of $u$, respectively (choosing arbitrarily if two intervals each completely contain \([i..j]\)), and implement \(\extract (u, i, j)\) as either \(\extract (v, i, j)\), \(\extract (v, i - b / 4, j - b / 4)\) or \(\extract (v, i - b / 2, j - b / 2)\).  Otherwise, \([i..j]\) must be more than a quarter of \([1..b]\) and we can split \([i..j]\) into 2 or 3 subintervals, each of length at least \(b / 8\) but completely contained in one of  \([1..b / 2]\), \([b / 4 + 1..3 b / 4]\) or \([b / 2 + 1..b]\); this is the other reason why we use overlapping blocks.  We implement \(\extract (u, i, j)\) with an \extract\ query for each subinterval.

Now suppose that $u$ is a leaf.  Again, let $d$ be $u$'s depth and \(b = 2^{\lceil \log_2 n \rceil - d}\).  If the interval \([i..j]\) is completely contained in one of the intervals \([1..b / 2]\), \([b / 4 + 1..3 b / 4]\) or \([b / 2 + 1..b]\), then we set $v$ to be the first, second or third internal node at the same depth to which $u$ points, respectively, and implement \(\extract (u, i, j)\) as \(\extract (v, i', j')\), where $i'$ and $j'$ are $i$ and $j$ plus the appropriate offset.  Otherwise, \([i..j]\) must be more than a quarter of \([1..b]\); we split \([i..j]\) into subintervals and implement \(\extract (u, i, j)\) with an \extract\ query for each subinterval, as before.

\begin{theorem}
Extracting a substring \(s [f..\ell]\) from the block graph of $s$ takes $\Oh{\log n + \ell - f}$ time.
\end{theorem}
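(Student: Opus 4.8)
The plan is to bound the running time of \(\extract (\langle 1..n \rangle, f, \ell)\) by analysing its recursion tree; write \(m = \ell - f + 1\). Every recursive call is of one of three types: a \emph{base} call, at a maximum-depth internal node, which copies \(j - i + 1\) explicitly stored characters and returns; a \emph{split} call, which makes \(2\) or \(3\) recursive calls on subintervals partitioning its interval; and a \emph{descent} call, which makes a single recursive call on an interval of the same length, possibly after first following a leaf's pointer to an internal node at the same depth. A straightforward induction on the recursion tree shows that each call returns exactly (its interval's length) characters, assembled from those returned by the base calls; hence the base calls return \(m\) characters in total, so there are at most \(m\) of them and their copying costs \(\Oh{m}\), and --- since in any tree the number of nodes with two or more children is less than the number of leaves --- there are fewer than \(m\) split calls. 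It therefore suffices to show that there are \(\Oh{\log n + m}\) descent calls, each of which, like each split call, costs only \(\Oh{1}\).

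The descent calls form maximal chains in the recursion tree, and along any one of them the depth in the block graph increases by at least \(1\) after every \(\Oh{1}\) calls: a descent at an internal node moves to a child one level down, while a descent at a leaf at depth \(d\) follows a pointer to an internal node at depth \(d\), which then either moves deeper or is a base call. Since the block graph has depth \(\log n\), every descent chain has length \(\Oh{\log n}\); the crux is to show that all of them except the one containing the root have length \(\Oh{1}\). For this I would use the invariant that whenever a call is a proper descendant of some split call, its interval of length \(L\) lies in a (sub)block of size \(B\) with \(L \ge B/4\). This holds at the children of a split call because, as remarked when \extract\ was described, splitting a node whose block has size \(b\) produces subintervals of length at least \(b/8\), each handed to a child whose block has size \(b/2 = 4 \cdot (b/8)\); and it is preserved, since a call with \(L \ge B/4\) either is a split call, whose children again receive intervals of length at least \(B/8 = (B/2)/4\), or is a descent call, whose child has block size \(B/2\) and the same interval, giving \(L \ge B/4 = (B/2)/2\) there. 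But an interval that is more than half of its block cannot be completely contained in one of the three overlapping subblocks (each of size half the block), so a call whose interval exceeds half its block is a split or base call. Hence, following a descent chain below a split call, each descent halves the block size while \(L\) stays fixed, so within \(\Oh{1}\) steps \(L\) exceeds half the current block and the chain ends; every such chain has length \(\Oh{1}\).

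It follows that every maximal descent chain is either the one starting at the root, of length \(\Oh{\log n}\), or a chain whose top node is a child of a split call --- and there are \(\Oh{m}\) of the latter, each of length \(\Oh{1}\) --- so the recursion tree has \(\Oh{\log n + m}\) descent calls and \(\Oh{\log n + m}\) nodes overall. Each non-base call costs \(\Oh{1}\) (arithmetic to locate or split the interval, plus at most one pointer to follow) and the base calls cost \(\Oh{m}\) together, so \(s [f..\ell]\) is extracted in \(\Oh{\log n + m} = \Oh{\log n + \ell - f}\) time. (When the block size drops below a small constant the ``length at least \(b/8\)'' claim is vacuous, but then the whole interval has length \(\Oh{1}\) and is extracted by brute force, so the bound is unaffected.) I expect the invariant bounding the descent-chain lengths to be the only step requiring real thought; everything else is bookkeeping on the recursion tree.
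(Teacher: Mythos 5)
Your proof is correct and rests on the same key invariant as the paper's: after the first split, every recursive call's interval has length at least a quarter of its block size, so the total work below the first split is $\Oh{\ell - f}$ while the initial descent to the first split costs $\Oh{\log n}$. The only difference is bookkeeping --- you classify calls into base/split/descent calls and bound the lengths of descent chains, whereas the paper counts calls per depth and sums a geometric series --- so the two arguments are essentially the same (yours spells out more of the details the paper leaves implicit).
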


\begin{proof}
Consider the query \(\extract({\mathrm{root}, f, \ell})\) and let $d$ be the first depth at which we split the interval.  Descending to depth $d$ takes a total of $\Oh{d}$ time.  By induction, if we perform a query \(\extract (v, i, j)\) on a node $v$ at depth \(d' > d\), then \(j - i + 1\) is more than a quarter of the block size \(2^{\lceil \log_2 n \rceil - d'}\) at that level.  It follows that we make $\Oh{(\ell - f + 1) / 2^{\log n - d'}}$ calls to \extract\ at depth $d'$, each of which takes $\Oh{1}$ time.  Summing over the depths, we use a total of $\Oh{\log n + \ell - f}$ time.
\qed
\end{proof}

One interesting property of our block graph structure is that, at the cost of
storing a node for every possible block of size \(n / 2^d\) --- i.e.,
storing $\Oh{2^d \log n}$ extra bits --- we can remove the top $d$ levels
and, thus, change the overall space bound to $\Oh{z (\log n - d) \log n +
2^d \log n}$ bits and reduce the access time to $\Oh{\log n - d}$.  For
example, if \(d = \log z\), then we store a total of $\Oh{z \log n \log (n / z)}$
bits and need only $\Oh{\log (n / z)}$ time for access.
If \(d = \log (n / \log^2 n)\), then we store a total of $\Oh{z \log n \log \log
n + n / \log n}$ bits and reduce the access
time to $\Oh{\log \log n}$.

Gonz\'alez and Navarro~\cite{GN07} showed how, by applying
grammar-based compression to a difference-coded suffix array (SA), we can
build a new kind of compressed suffix array that supports access to
\(\mathrm{SA} [i..j]\) in $\Oh{\log n + \ell - f}$ time.  It seems likely that, by using
a modified block graph of the difference-coded suffix array instead of
a grammar, we can improve their access time to $\Oh{\log \log n + \ell - f}$
at the cost of only slightly increasing their space bound.

\section{Accelerated approximate pattern matching} \label{sec:accelerated}

Suppose we are given an uncompressed string $s$ of length $n$, the LZ77 parse~\cite{ZL77} of $s$, a pattern $p$ of length \(m \leq n\) and an edit distance \(k \leq m\).  The primary matches of $p$ are the substrings of $s$ within edit distance $k$ of $p$ whose characters are all within distance \((m + k)\) of phrase boundaries in the parse.  It is not difficult to find all $p$'s primary matches in $\Oh{z \min (m k + m, k^4 + m)}$ time, where $z$ is the number of phrases.  To do this, we extract the substrings all of whose characters are within distance \((m + k)\) of phrase boundaries and apply to them either the sequential approximate pattern-matching algorithm by Landau and Vishkin~\cite{LV89} or the one by Cole and Hariharan~\cite{CH02}.

Once we have found $p$'s primary matches, we can use them to find the approximate matches not within distance \((m + k)\) of any phrase boundary, which are called $p$'s secondary matches.  To do this, we process the phrases from left to right, maintaining a sorted list of the approximate matches we have already found.  For each phrase copied from a previous substring \(s [i..j]\), we search in the list to see if there are any approximate matches in \(s [i..j]\) that are not completely contained in \(s [i..i + m + k - 1]\) or \(s [j - m - k + 1..j]\).  If there are, we insert the corresponding secondary matches in our list.  Processing all the phrases takes $\Oh{z + \occ}$ time, where \occ\ is the number of approximate matches to $p$ in $s$.  Notice that finding $p$'s secondary matches does not require access to $s$.

As noted in Section~\ref{sec:intro}, Bille et al.~\cite{BLRSSW11} showed how, given a straight-line program for $s$ with $r$ rules, we can build an $\Oh{r}$-word data structure that allows us to extract any substring \(s [f..\ell]\) in $\Oh{\log n + \ell - f}$ time.  When the straight-line program is built with the best known algorithm for approximately minimizing the number of rules, \(r = \Oh{z \log n}\)~\cite{Ryt03}.  It follows that we can store $s$ in $\Oh{z \log n}$ words such that, given $p$ and $k$, in $\Oh{z (\log n + m)}$ time we can extract all the characters within distance \((m + k)\) of phrase boundaries and, therefore, find all $p$'s approximate matches in $\Oh{z (\min (m k + m, k^4 + m) + \log n) + \occ}$ time.  (Bille et al. themselves gave a bound of $\Oh{r (\min (m k + m, k^4 + m) + \log n) + \occ}$ but, since even the smallest straight-line program for $s$ has at least $z$ rules~\cite{Ryt03}, the one we state is slightly stronger.)

The key to supporting approximate pattern matching in the block graph is the addition of {\em bookmarks},
which will allow us to quickly extract certain regions of the underlying string.
To add a bookmark to a character \(s [i]\), for each block size $b$ in the block graph, we store pointers to the two nodes whose blocks of size \(2 b\) completely contain the first occurrence of the substrings \(s [i - b + 1..i]\) and \(s [i..i + b - 1]\), and those occurrences' offsets in the blocks.  Thus, storing a bookmark takes $\Oh{\log n}$ words.  To extract a substring that touches \(s [i]\), we extract, separately, the parts of the substring to the left and right of \(s [i]\).  Without loss of generality, we assume the part \(s [i..j]\) to the right is longer and consider only how to extract it.  We first find the smallest block size \(b \geq j - i + 1\), then follow the pointer to the node whose block of size \(2 b\) contains the first occurrence \(s [i..i + b - 1]\).  Since that node has height $\Oh{\log (j - i + 1)}$, we can extract \(s [i..j]\) in $\Oh{j - i + 1}$ time.

\begin{lemma} \label{lem:bookmark}
Extracting a substring \(s [f..\ell]\) that touches a bookmark takes $\Oh{\ell - f}$ time.
\end{lemma}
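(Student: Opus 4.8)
The plan is to split the query into its two halves at the bookmarked position and handle each half with the stored pointers. Given a substring $s[f..\ell]$ that touches a bookmark at $s[i]$ with $f \leq i \leq \ell$, I would first argue that we may treat the left part $s[f..i-1]$ and the right part $s[i..\ell]$ independently: concatenating the two extracted strings costs only $\Oh{1}$ overhead beyond the work of extracting each, and the total output length is $(\ell - f)$, so it suffices to bound each half by $\Oh{\text{half-length}}$. Without loss of generality assume the right part $s[i..\ell]$ is the longer one, of length at most $\ell - f + 1$, and concentrate on extracting it; the left part is symmetric using the bookmark's leftward pointers.

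For the right part, let $b$ be the smallest block size in the block graph with $b \geq \ell - i + 1$; note $b < 2(\ell - i + 1)$ since block sizes are powers of two. The bookmark for $s[i]$ stores, for this $b$, a pointer to the node $v$ whose block of size $2b$ completely contains the first occurrence of $s[i..i+b-1]$, together with the offset $c$ of that occurrence inside $v$'s block. The first key step is to observe that the requested interval $s[i..\ell]$ is a prefix of $s[i..i+b-1]$, hence it too occurs at offset $c$ inside $v$'s block, so extracting $s[i..\ell]$ reduces to the query $\extract(v, c+1, c+\ell-i+1)$. The second key step is the height bound: $v$'s block has size $2b = \Oh{\ell - i + 1}$, and since the block graph is a balanced structure in which block sizes halve at each descent, the subtree rooted at $v$ has height $\Oh{\log b} = \Oh{\log(\ell - i + 1)}$. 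Running the \extract\ procedure of the previous section from $v$ therefore costs $\Oh{\log(\ell - i + 1) + (\ell - i + 1)} = \Oh{\ell - i + 1}$ time rather than $\Oh{\log n + \ell - i + 1}$ — the point of the bookmark is precisely that it lets us start the descent at a low node whose height is already proportional to the logarithm of the \emph{output} length, not of $n$.

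Combining the two halves, the total time is $\Oh{(i - 1 - f + 1)} + \Oh{(\ell - i + 1)} = \Oh{\ell - f}$ (using that the longer half has length $\Theta(\ell - f)$ to absorb additive constants), which is the claimed bound. The one subtlety I expect to need care is the height claim for $v$: a node of the block graph can have two parents, and truncation removes some nodes, so I should make explicit that what matters is that every \emph{internal} descendant of $v$ at block size $2^t$ sits at depth $\log(2b) - t$ below $v$, so the number of levels between $v$ and the deepest explicitly-stored block is $\Oh{\log b}$; the leaf-redirection pointers add only $\Oh{1}$ per level, exactly as in the proof of the $\Oh{\log n + \ell - f}$ extraction bound, just started from $v$ instead of the root. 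Everything else is a direct reuse of the \extract\ machinery already analysed, so there is no genuinely new obstacle beyond this bookkeeping.
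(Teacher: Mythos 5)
Your proposal is correct and follows essentially the same route as the paper: split the query at the bookmarked position, take the longer side, pick the smallest block size $b$ at least the length of that side, jump via the stored pointer to the node of size $2b$ containing the first occurrence, and run the standard \extract\ procedure from there, so the $\Oh{\log n}$ descent cost is replaced by the $\Oh{\log(\ell - i + 1)}$ height of that node and absorbed into $\Oh{\ell - f}$. The extra bookkeeping you flag (the prefix/offset reduction and the height of the subgraph below the chosen node) is exactly the detail the paper leaves implicit, so there is no gap.
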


Inserting a bookmark to each phrase boundary in the LZ77 parse takes $\Oh{z \log n}$ words and allows us, given $m$ and $k$, to extract the characters within distance \((m + k)\) of phrase boundaries in a total of $\Oh{z m}$ time. Combined with the approach described above for finding secondary occurrences, we have our main result.

\begin{theorem} \label{thm:matching}
Let $s$ be a string of length $n$ whose LZ77 parse consists of $z$ phrases.  We can store $s$ in $\Oh{z \log n}$ words such that, given a pattern $p$ of length \(m \leq n\) and an edit distance \(k \leq m\), we can find all \occ\ substrings of $s$ within edit distance $k$ of $p$ in $\Oh{z \min (m k + m, k^4 + m) + \occ}$ time.
\end{theorem}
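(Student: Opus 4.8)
The plan is to assemble the data structure from the pieces developed in Sections~\ref{sec:blocks}--\ref{sec:accelerated} and then verify that it has the claimed space and that a two-phase search procedure is correct and runs within the claimed time.

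\textbf{The data structure.} First I would store the LZ77 parse of $s$ (the $z$ phrases, each describable in $\Oh{1}$ words) together with the block graph of $s$, which occupies $\Oh{z \log^2 n}$ bits, i.e.\ $\Oh{z \log n}$ words, by the theorem bounding the block graph's size. To each of the $z$ phrase boundaries I would attach a bookmark; since one bookmark occupies $\Oh{\log n}$ words, all $z$ of them together occupy $\Oh{z \log n}$ words. Summing, the whole structure is $\Oh{z \log n}$ words, as required, and $s$ itself is discarded.

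\textbf{Primary matches.} Given $p$ and $k \leq m$, for each phrase boundary I would use Lemma~\ref{lem:bookmark} to extract the window consisting of the $m + k$ characters immediately to its left together with the $m + k$ immediately to its right --- a string of length $\Oh{m + k} = \Oh{m}$, obtained in $\Oh{m}$ time. On each such window I would run either the Landau--Vishkin algorithm~\cite{LV89}, which uses $\Oh{m k}$ time on a text of length $\Oh{m}$, or the Cole--Hariharan algorithm~\cite{CH02}, which uses $\Oh{m + k^4}$ time, whichever is smaller, and report the induced occurrences of $p$ in $s$. Any substring of $s$ within edit distance $k$ of $p$ all of whose characters lie within distance $m + k$ of some boundary has length at most $m + k$ and therefore sits inside the corresponding window, so this finds every primary match; conversely every occurrence found is genuine. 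Extraction plus search costs $\Oh{m + \min (m k, k^4 + m)} = \Oh{\min (m k + m, k^4 + m)}$ per boundary, hence $\Oh{z \min (m k + m, k^4 + m)}$ over all $z$ boundaries, and a match touching several windows is reported at most $\Oh{z}$ times, which is absorbed.

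\textbf{Secondary matches and time accounting.} With the primary matches in hand I would run the left-to-right sweep of Section~\ref{sec:accelerated}: keeping the matches found so far in sorted order, for each phrase copied from an earlier region $s [i..j]$ I would look up the matches lying in $s [i..j]$ but not contained in its first or last $m + k$ characters and insert the translated copies. To see this finds every secondary match, take such a match $w$: since $w$ avoids every boundary region it lies in the interior of a single phrase and, being at the same offset, in the interior of that phrase's source, where it is again either a primary match or a secondary match strictly further left; iterating, $w$ is eventually traced to a primary match and the sweep propagates it forward. This phase needs no access to $s$. The part I expect to be the main obstacle is the time accounting here: I would argue, as in Section~\ref{sec:accelerated}, that the sweep runs in $\Oh{z + \occ}$ time --- each phrase is processed once and each insertion corresponds to a distinct reported occurrence --- taking care that the sorted-list operations and the duplicate suppression introduce no extra logarithmic factor. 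Combining the two phases yields total time $\Oh{z \min (m k + m, k^4 + m) + \occ}$, which completes the proof.
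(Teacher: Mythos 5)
Your proposal is correct and follows essentially the same route as the paper: block graph plus a bookmark at each phrase boundary for $\Oh{z\log n}$ words, extraction of the $\Oh{m}$-length boundary windows via Lemma~\ref{lem:bookmark}, Landau--Vishkin or Cole--Hariharan on those windows for the primary matches, and the left-to-right $\Oh{z+\occ}$ sweep for the secondary matches. You are in fact somewhat more explicit than the paper on two points it leaves implicit --- the correctness of propagating secondary matches back to a primary one, and the absorption of duplicate reports from overlapping windows --- so no gap remains.
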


Note that, in the above theorem, the time to find all $p$'s approximate matches is the same as if we were
keeping $s$ uncompressed, as in the approach described at the start of this section.

We note in passing that we can combine our results with those of Kreft and Navarro~\cite{KN11} to obtain a new worst-case upper bound for LZ77-based indexing.  Specifically, replacing their data structures for access to the string by a block graph with a bookmark at each phrase boundary, and replacing two of their other data structures by faster (and larger, but still $\Oh{z \log^2 n}$ bits) data structures, we can store $s$ in $\Oh{z \log^2 n}$ bits such that, given a pattern $p$ of length $m$, we can find all occurrences of $p$ in $s$ in $\Oh{m^2 + (m + \occ) \log \log z}$ time.  Their index is practical but potentially larger and slower in the worst case.

\section{Efficient representation of block graphs}

We now describe an implementation of block graphs which is efficient in
practice. The main idea is to represent the shape of the graph (the internal
nodes and their pointers) using bitvectors and operations from succinct data
structures, and to carefully allocate space for the leaf nodes depending on
their distance from the root. Below we make use of two familiar operations
for bitvectors: $rank$ and $select$. Given a bitvector $B$, a position $i$,
and a type of bit $b$ (either 0 or 1), $rank_b(B,i)$ returns the number of
occurrences of $b$ before position $i$ in $B$ and $select_b(B,i)$ returns the position of the
$i$th $b$ in $B$. Efficient data structures supporting these operations
have been extensively studied (see, e.g.~\cite{OS07,RRR07}).

Each level of the block graph consists of a number of nodes,
either internal nodes, or leaves. Let $B_d$ be a bitvector which says
whether the $i$th node (from the left) at depth $d$ is a leaf, $B_d[i] = 0$, or an
internal node $B_d[i] = 1$. We define another bitvector $R_d$, where
$R_d[i] = 1$ if and only if $B_d[i] = 1$ and $B_d[i + 1] = 1$ for $i <
n - 1$. That is, we mark a 1 bit for each instance of two adjacent
internal nodes in $B_d$, otherwise $R_d[i] = 0$. Let $L_d$ be an array
that holds leaf nodes at depth $d$. The structure of a leaf node is
discussed below. Finally, let $T$ be the concatenation of the textual
representation (ie. the corresponding substrings) of all internal nodes
at the truncated depth.

\paragraph{Navigating the block graph.} The main operation is to
traverse from an internal node to one of its three children. Say we
are currently at the $j$th internal node at depth $d$ of the block graph
--- that is, we are at $B_d[i]$, where $i = select_1(B_d,j)$. Each
internal node has three children. If these children were independent
then locating the left child of the current node would be simply
three times the node's position on its level, that is $3j =
3\cdot rank_1(B_d,i)$. However, in a block graph adjacent internal nodes
share exactly one child, so we correct for this by subtracting the
number of adjacent internal nodes at this depth prior to the current
node --- this is given by $rank_1(R_d,i)$. To find the position
corresponding to the left child of a node in $B_{d+1}$ we compute

$$\mbox{leftchild}(B_d,i) = 3rank_1(B_d,i) - rank_1(R_d,i)$$

Given the address of the left child it is easy to find the center or
right child by adding 1 or 2 respectively to the result of
$\mbox{leftchild}$. If $B_d[i] = 0$ then we are at a leaf node. Intuitively,
to access its leaf information in $L_d$ we call $L_d[rank_0(B_d,i)]$.
Once we reach the truncated depth to access the text of an internal
node we compute its offset in $T$, $T[(rank_1(B_d,i) *
  truncated\ length)]$.

\paragraph{Leaf nodes.} In a block graph leaves point to internal
nodes. For each leaf we store two values, the position of the
destination node on the current level, and an offset in the
destination node pointing to the beginning of the leaf block. Note
that we do not need to store the depth of the destination node. It is,
by definition, on the level above the leaf, and we know this by
keeping keep track of the depth during each step in a traversal. To
improve compression we store leaf positions and offsets in two
separate arrays. At depth $d$ there are no more than \(2^{d + 1} - 1\)
possible nodes, so we can store each position in \(\log (2^{d + 1} -
1)\) bits.  Given that the length of a node at depth $d$ is \(b =
2^{\lceil \log n \rceil - d}\) and leaf nodes point to an internal
node on the level above, we store each offset in \(\log (2^{\lceil \log
  n \rceil - d - 1})\) bits.

\section{Experiments} \label{sec:experiments}

We have developed an implementation of block graphs\footnote{Available at {\tt http://www.github.com/choobin/block-graph}} and tested it on the
real-world texts of the Pizza-Chili Repetitive Corpus\footnote{\tt http://pizzachili.dcc.uchile.cl/repcorpus.html},
a standard testbed for data structures designed for repetitive strings.

We compared compression acheived by the block graph to the LZ-End data structure by Kreft and Navarro~\cite{KN10}, and to the general-purpose compressors {\sf gzip} and {\sf 7zip}; the results are shown in Table~\ref{tab:sizes}.  We used {\sf gzip} and {\sf 7zip} with the settings {\tt -9} and {\tt -t7z -m0=lzma -mx=9 -mfb=64 -md=32m -ms=on}, respectively, while LZ-End was executed with its default settings.  Throughout our experiments all block graphs were truncated such that the smallest blocks each took 4 bytes. Note that {\sf gzip} and {\sf 7zip} provide compression only, not random access, and are included as reference points for acheivable compression.

We then compared how quickly block graphs and LZ-End support extracting substrings of various lengths; the results are shown in Figure~\ref{fig:speeds3}.  Each run of extractions was performed across 10,000 randomly-generated queries.  Experiments were conducted on an Intel Core i7-2600 3.4 GHz processor with 8GB of main memory, running Linux 3.3.4; code was compiled with GCC version 4.7.0 targeting x86\_64 with full optimizations.  Caches were dropped between runs with {\tt sync \&\& echo 1 > /proc/sys/vm/drop\_caches}.

Although {\sf 7zip} achieves much better compression block graphs achieve better compression than {\sf gzip} except on the {\sf Escherichia Coli} and {\sf influenza} files.  Most importantly, our experiments show that block graphs generally achieve compression comparable to that achieved by LZ-End while supporting significantly faster substring extraction.

\begin{table}[t!]
\caption{Size in bytes of repetitive corpus files encoded with ASCII, {\sf gzip}, {\sf 7zip}, LZ-End and block graphs.}
\label{tab:sizes}
\centering
\begin{tabular}{lrrrrr}
{\bf Collection} & {\bf ASCII} & {\bf gzip} & {\bf 7zip} & {\bf LZ-End} & {\bf Block graph} \\
\\
Escherichia Coli & 112,689,515 &  31,535,023 & 6,147,962 & 49,106,638 & 49,716,456 \\
cere             & 461,286,644 & 120,834,282 & 6,077,972 & 41,342,784 & 57,689,376 \\
coreutils        & 205,281,778 &  49,920,838 & 3,999,812 & 35,863,520 & 47,795,692 \\
einstein.en.txt  & 467,626,544 & 163,664,285 &   323,779 &  2,247,204 &  3,969,392 \\
influenza        & 154,808,555 &  10,636,899 & 2,111,974 & 21,507,089 & 33,171,036 \\
kernel           & 257,961,616 &  69,396,104 & 2,087,006 & 19,347,734 & 24,045,332 \\
para             & 429,265,758 & 116,073,220 & 8,117,573 & 57,415,176 & 72,393,196 \\
world leaders    &  46,968,181 &   8,287,665 &   606,438 &  4,525,317 &  7,321,720 \\
\end{tabular}
\end{table}

%

\begin{figure}
\resizebox{100ex}{!}
{\begin{tabular}{cc}
\raisebox{0ex}[100ex][0ex]{\includegraphics{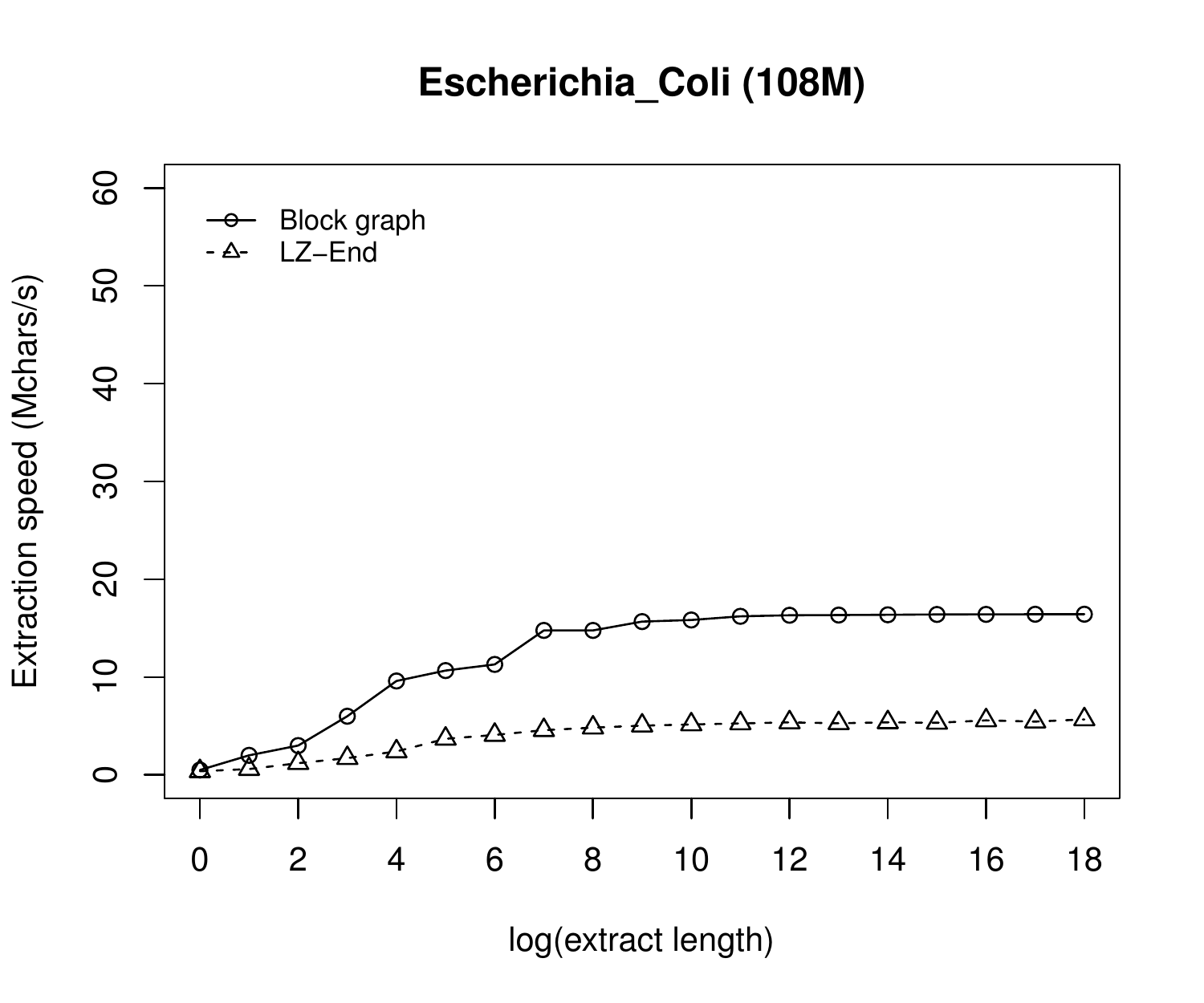}}&
\raisebox{0ex}[100ex][0ex]{\includegraphics{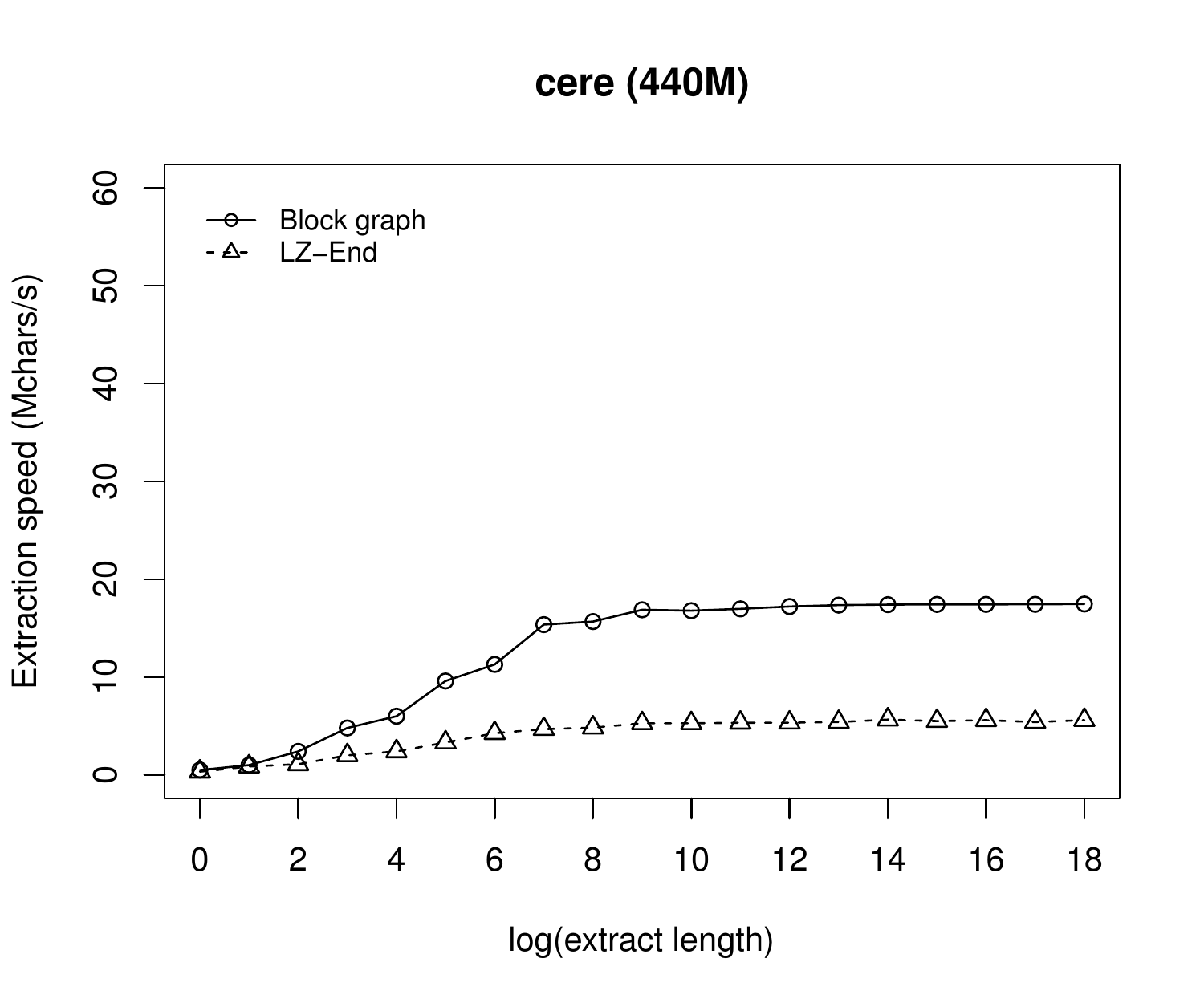}}\\
\raisebox{0ex}[100ex][0ex]{\includegraphics{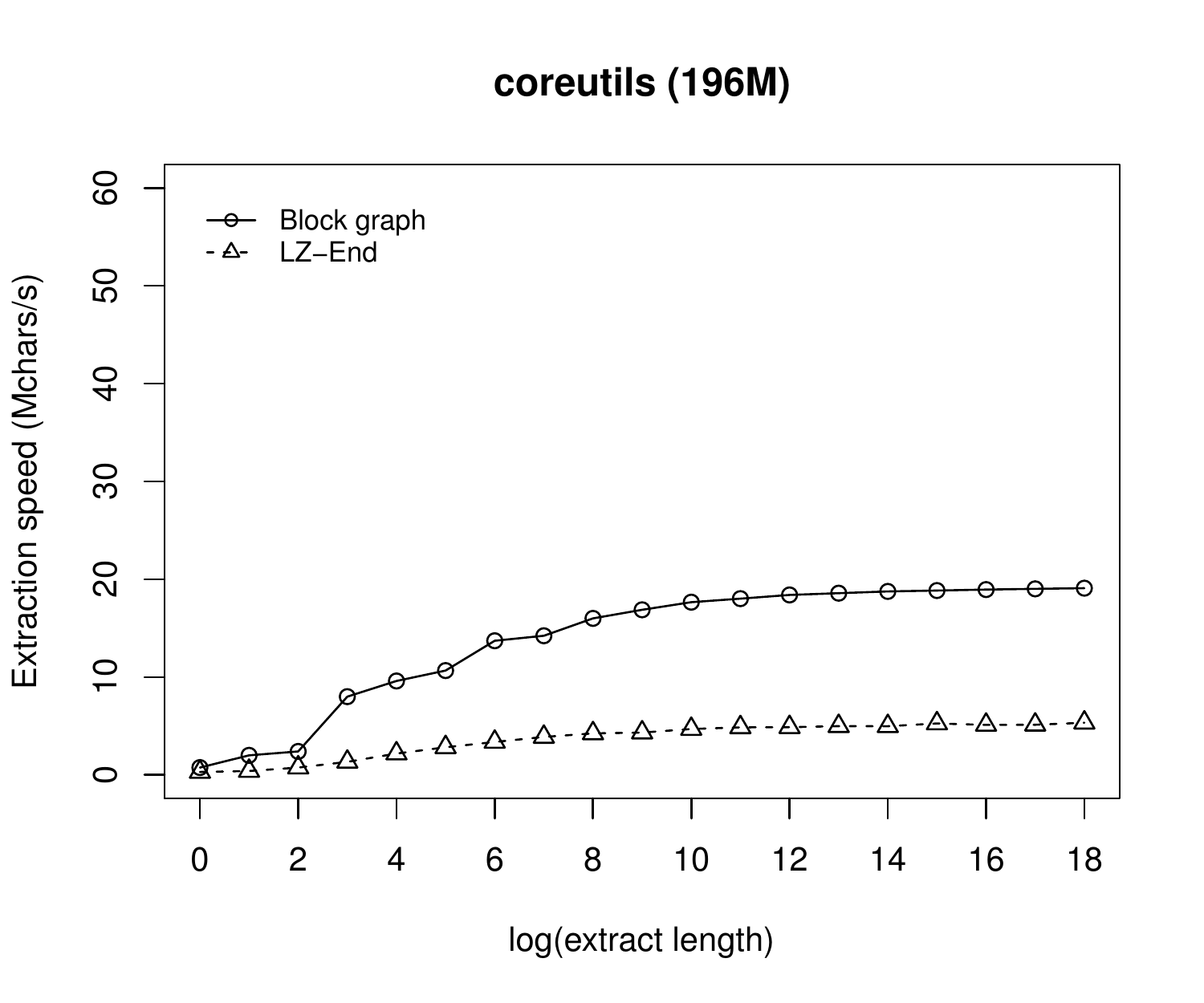}}&
\raisebox{0ex}[100ex][0ex]{\includegraphics{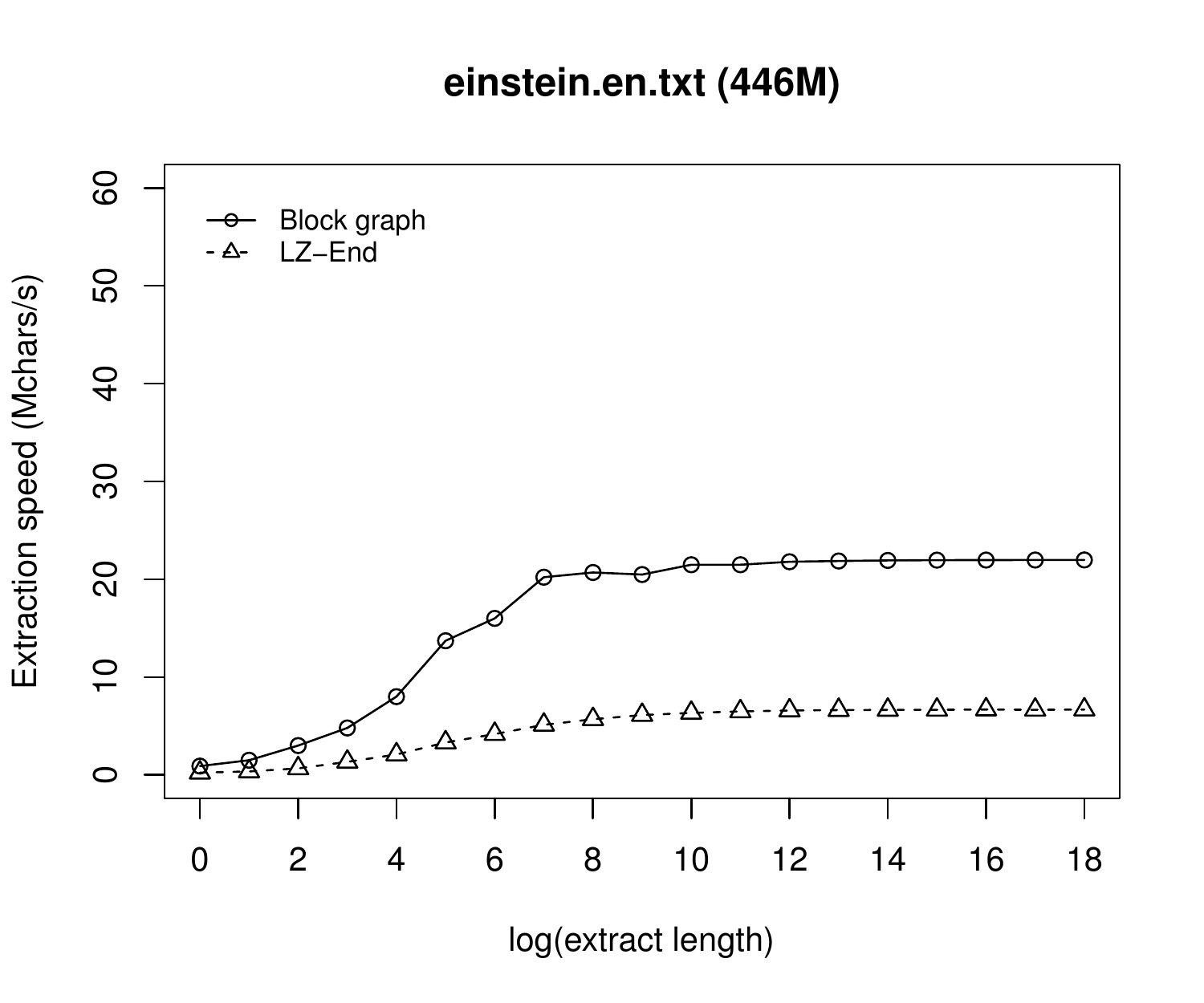}}\\
\raisebox{0ex}[100ex][0ex]{\includegraphics{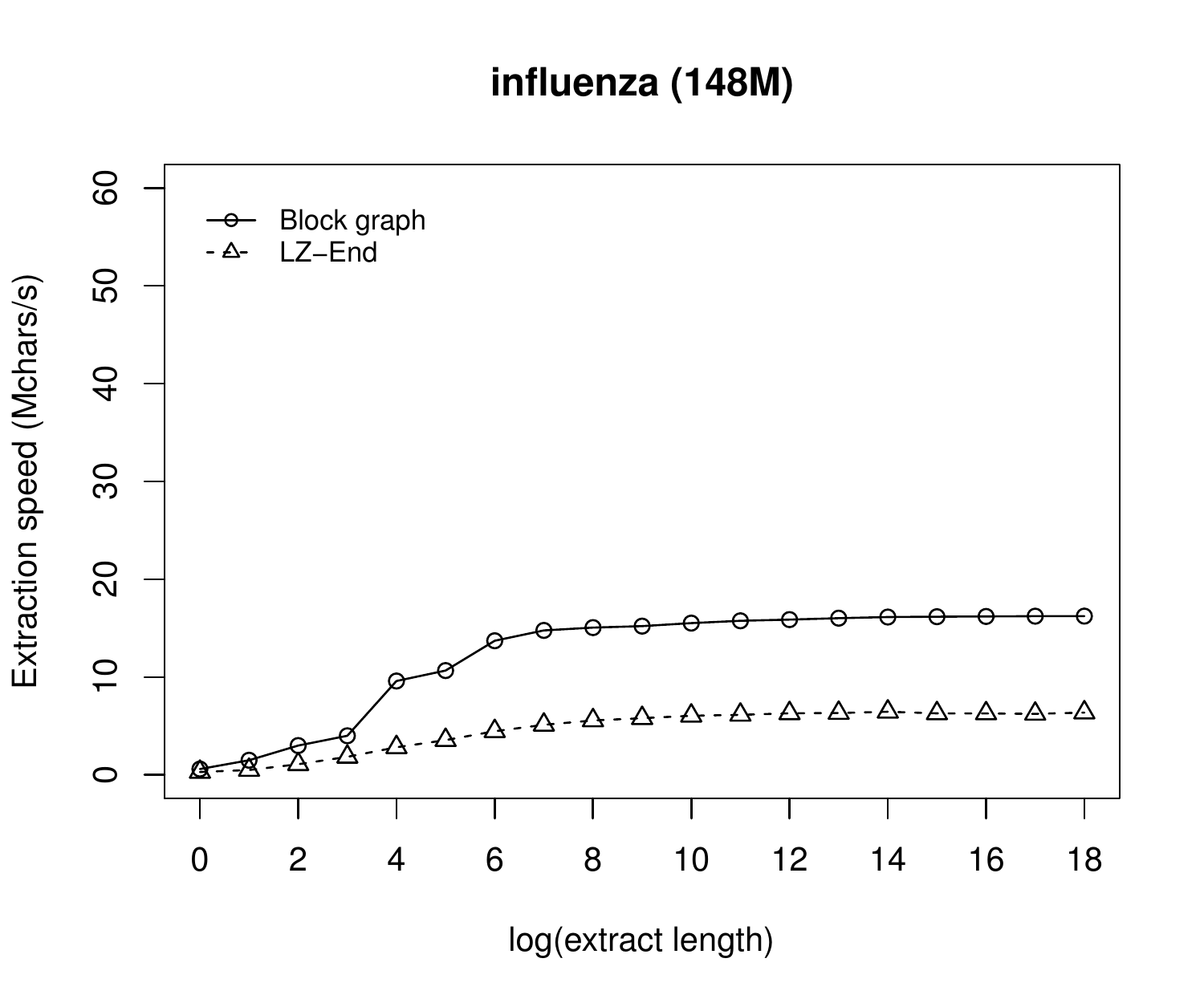}}&
\raisebox{0ex}[100ex][0ex]{\includegraphics{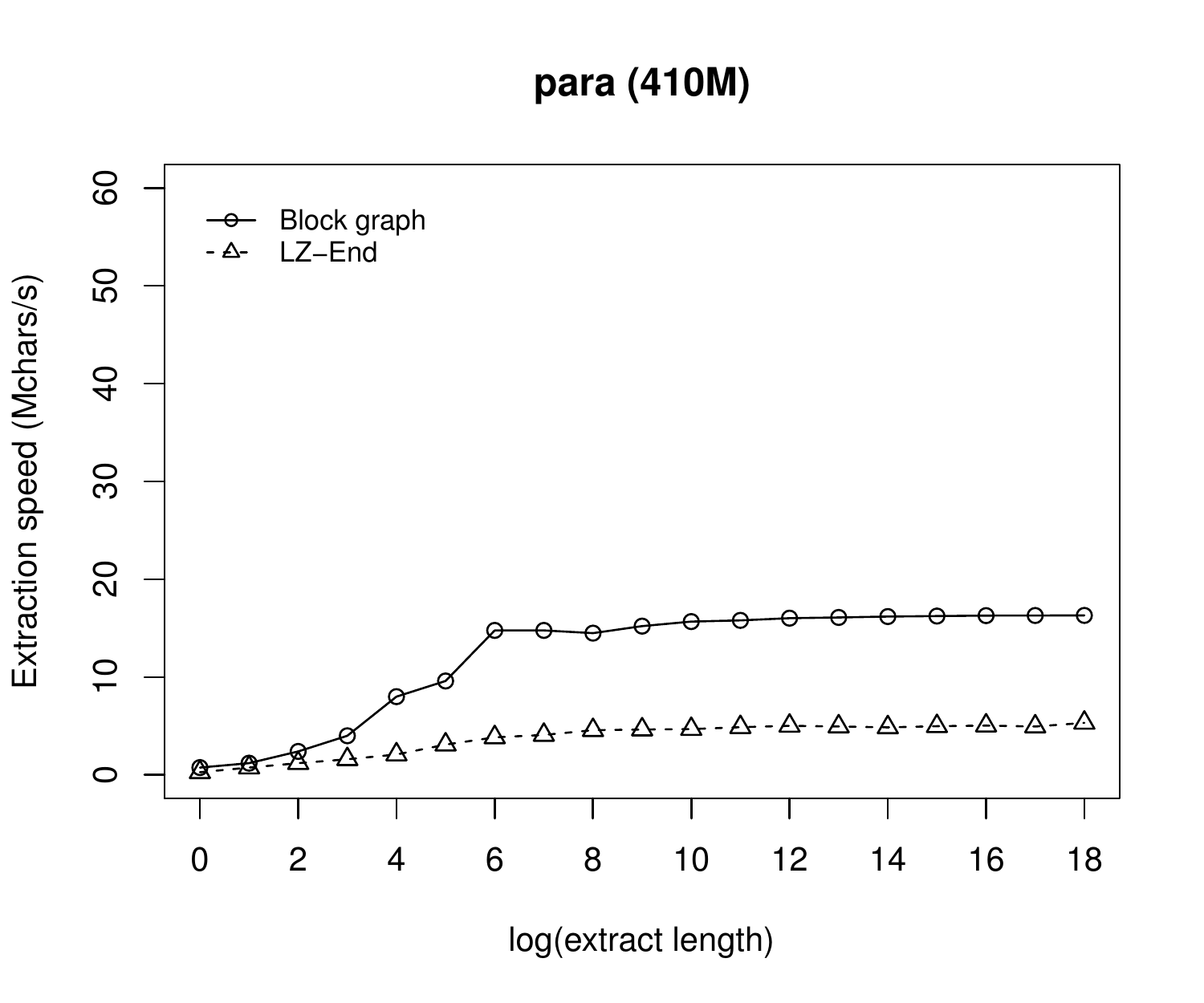}}\\
\raisebox{0ex}[100ex][0ex]{\includegraphics{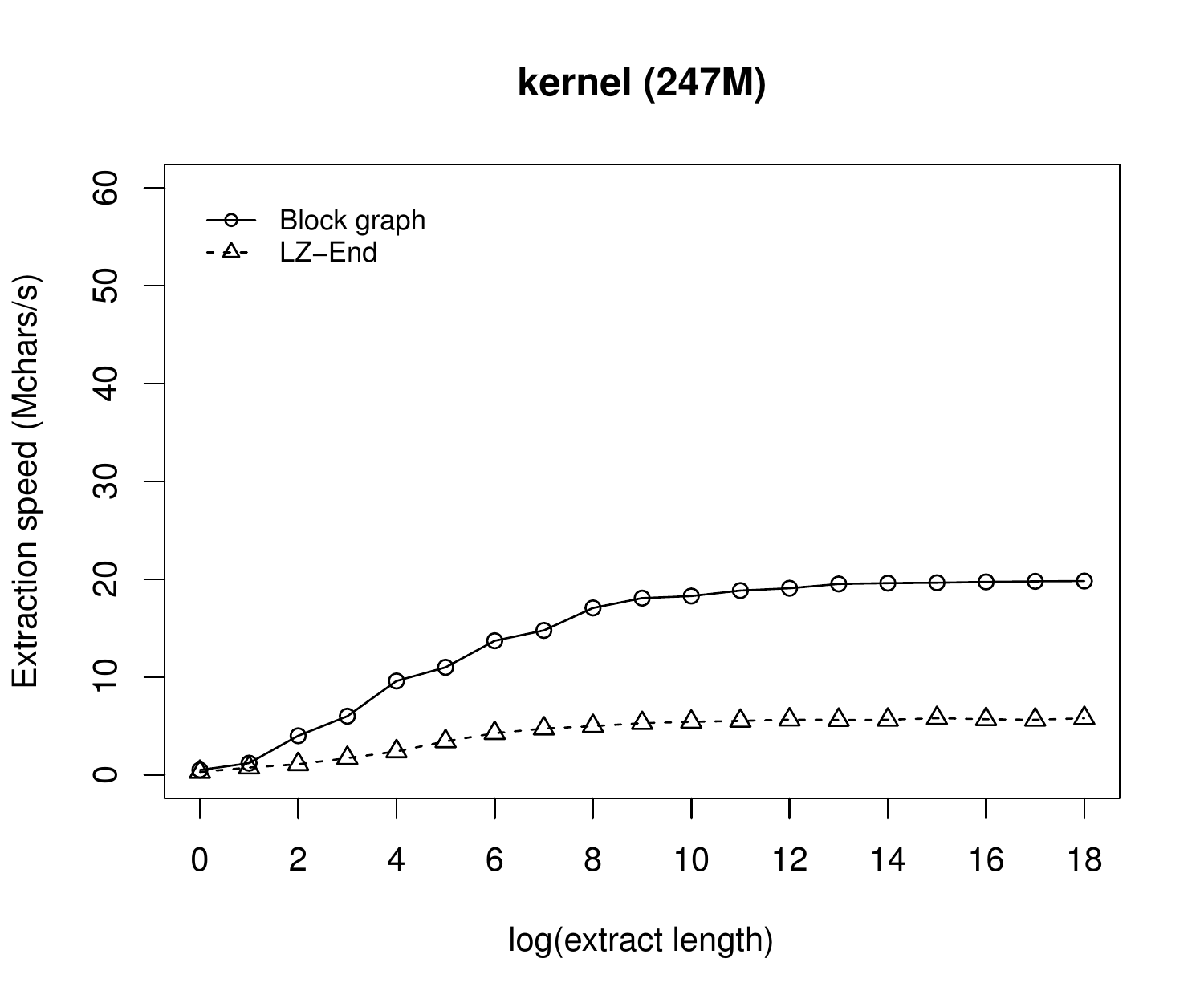}}&
\raisebox{0ex}[100ex][0ex]{\includegraphics{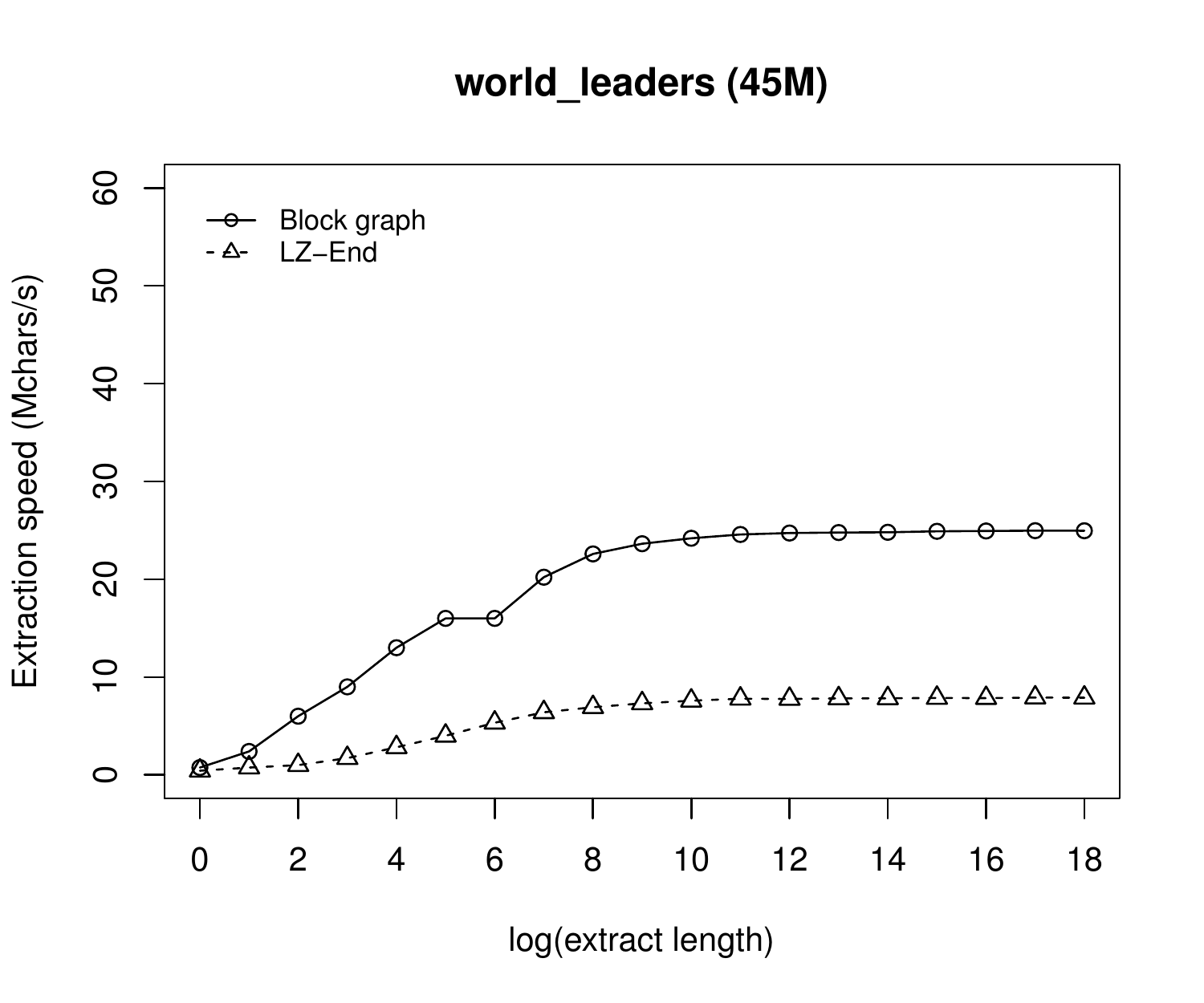}}
\end{tabular}}
\caption{Random access and extraction speeds. Times are averaged over 10,000 random substring extractions.}
\label{fig:speeds3}
\end{figure}

\section{Conclusions} \label{sec:conclusions}

Efficient storage and retrieval of highly repetitive strings, and approximate pattern matching in them, are important tools in bioinformatics and will become even more important as genomic databases grow.  In this paper we have presented a new data structure, the {\em block graph}, that stores highly repetitive strings in compressed space, supports random access in reasonable time and supports extraction from pre-specified points much faster.  Our analysis and experiments show that the block graph is competitive both in theory and in practice.

\section*{Acknowledgments}

Many thanks to Francisco Claude, Juha K\"arkk\"ainen, Sebastian Kreft, Gonzalo Navarro, Jorma Tarhio and Alexandru Tomescu, for helpful discussions.

\bibliographystyle{spbasic}
\bibliography{blocks}

\begin{thebibliography}{24}
\providecommand{\natexlab}[1]{#1}
\providecommand{\url}[1]{{#1}}
\providecommand{\urlprefix}{URL }
\expandafter\ifx\csname urlstyle\endcsname\relax
  \providecommand{\doi}[1]{DOI~\discretionary{}{}{}#1}\else
  \providecommand{\doi}{DOI~\discretionary{}{}{}\begingroup
  \urlstyle{rm}\Url}\fi
\providecommand{\eprint}[2][]{\url{#2}}

\bibitem[{Arroyuelo et~al(2012)Arroyuelo, Navarro, and Sadakane}]{ANS12}
Arroyuelo D, Navarro G, Sadakane K (2012) Stronger {Lempel-Ziv} based
  compressed text indexing. Algorithmica 62(1--2)

\bibitem[{Bille et~al(2011)Bille, Landau, Raman, Sadakane, Satti, and
  Weimann}]{BLRSSW11}
Bille P, Landau GM, Raman R, Sadakane K, Satti SR, Weimann O (2011) Random
  access to grammar-compressed strings. In: Proceedings of the 22nd Symposium
  on Discrete Algorithms (SODA), pp 373--389

\bibitem[{Cole and Hariharan(2002)}]{CH02}
Cole R, Hariharan R (2002) Approximate string matching: A simpler faster
  algorithm. SIAM Journal on Computing 31(6):1761--1782

\bibitem[{Durbin et~al(2010)}]{Dur10}
Durbin R, et~al (2010) 1000 genomes. {\tt http://www.1000genomes.org/}

\bibitem[{Ferragina and Manzini(2010)}]{FM10}
Ferragina P, Manzini G (2010) On compressing the textual web. In: Proceedings
  of the 3rd Conference on Web Search and Data Mining (WSDM), pp 391--400

\bibitem[{Ferragina and Venturini(2007)}]{FV07}
Ferragina P, Venturini R (2007) A simple storage scheme for strings achieving
  entropy bounds. Theoretical Computer Science 372(1):115--121

\bibitem[{Gagie and Gawrychowski(2010)}]{GG10}
Gagie T, Gawrychowski P (2010) Grammar-based compression in a streaming model.
  In: Proceedings of the 4th Conference on Language and Automata Theory and
  Applications (LATA), pp 273--284

\bibitem[{Gagie et~al(2011)Gagie, Gawrychowski, and Puglisi}]{GGP11}
Gagie T, Gawrychowski P, Puglisi SJ (2011) Faster approximate pattern matching
  in compressed repetitive texts. In: Proceedings of the 22nd International
  Symposium on Algorithms and Computation (ISAAC), pp 653--662

\bibitem[{Gagie et~al(2012)Gagie, Gawrychowski, K{\"a}rkk{\"a}inen, Nekrich,
  and Puglisi}]{GGKNP12}
Gagie T, Gawrychowski P, K{\"a}rkk{\"a}inen J, Nekrich Y, Puglisi SJ (2012) A
  faster grammar-based self-index. In: Proceedings of the 6th Conference on
  Language and Automata Theory and Applications (LATA), pp 240--251

\bibitem[{{Genome 10K Community of Scientists}(2009)}]{G10K}
{Genome 10K Community of Scientists} (2009) A proposal to obtain whole-genome
  sequence for 10,000 vertebrate species. Journal of Heredity 100:659--674

\bibitem[{Gonz{\'a}lez and Navarro(2007)}]{GN07}
Gonz{\'a}lez R, Navarro G (2007) Compressed text indexes with fast locate. In:
  Proceedings of the 18th Symposium on Combinatorial Pattern Matching (CPM), pp
  216--227

\bibitem[{Kreft and Navarro(2010)}]{KN10}
Kreft S, Navarro G (2010) {LZ77}-like compression with fast random access. In:
  Proceedings of the Data Compression Conference (DCC), pp 239--248

\bibitem[{Kreft and Navarro(2011)}]{KN11}
Kreft S, Navarro G (2011) Self-indexing based on {LZ77}. In: Proceedings of the
  22nd Annual Symposium on Combinatorial Pattern Matching (CPM), pp 41--54

\bibitem[{Landau and Vishkin(1989)}]{LV89}
Landau GM, Vishkin U (1989) Fast parallel and serial approximate string
  matching. Journal of Algorithms 10(2):157--169

\bibitem[{M{\"a}kinen et~al(2010)M{\"a}kinen, Navarro, Sir{\'e}n, and
  Valim{\"a}ki}]{MNSV10}
M{\"a}kinen V, Navarro G, Sir{\'e}n J, Valim{\"a}ki N (2010) Storage and
  retrieval of highly repetitive sequence collections. Journal of Computational
  Biology 17(3):281--308

\bibitem[{Manzini(2001)}]{Man01}
Manzini G (2001) An analysis of the {B}urrows-{W}heeler transform. Journal of
  the ACM 48(3):407--430

\bibitem[{Okanohara and Sadakane(2007)}]{OS07}
Okanohara D, Sadakane K (2007) Practical entropy-compressed rank/select
  dictionary. In: Proceedings of the Workshop on Algorithm Engineering and
  Experiments (ALENEX)

\bibitem[{Raman et~al(2007)Raman, Raman, and Satti}]{RRR07}
Raman R, Raman V, Satti SR (2007) Succinct indexable dictionaries with
  applications to encoding $k$-ary trees, prefix sums and multisets. ACM
  Transactions on Algorithms 3(4)

\bibitem[{Rytter(2003)}]{Ryt03}
Rytter W (2003) Application of {Lempel-Ziv} factorization to the approximation
  of grammar-based compression. Theoretical Computer Science 302(1--3):211--222

\bibitem[{Sir{\'e}n et~al(2008)Sir{\'e}n, V{\"a}lim{\"a}ki, M{\"a}kinen, and
  Navarro}]{SVMN08}
Sir{\'e}n J, V{\"a}lim{\"a}ki N, M{\"a}kinen V, Navarro G (2008) Run-length
  compressed indexes are superior for highly repetitive sequence collections.
  In: Proceedings of the 15th Symposium on String Processing and Information
  Retrieval (SPIRE), pp 164--175

\bibitem[{Storer and Szymanski(1982)}]{SS82}
Storer JA, Szymanski TG (1982) Data compression via textual substitution.
  Journal of the ACM 29(4):928--951

\bibitem[{Vezzi et~al(2012)Vezzi, {Del Fabbro}, Tomescu, and
  Policriti}]{VDTP12}
Vezzi F, {Del Fabbro} C, Tomescu AI, Policriti A (2012) {rNA}: a fast and
  accurate short reads numerical aligner. Bioinformatics 28(1):123--124

\bibitem[{Ziv and Lempel(1977)}]{ZL77}
Ziv J, Lempel A (1977) A universal algorithm for sequential data compression.
  IEEE Transactions on Information Theory 23(3):337--343

\end{thebibliography}

\end{document}